\newtheorem{theorem}{Theorem}[section]
\begin{document}
	\title{Modified general relativity}
	\author{Gary Nash
		\\University of Alberta, Edmonton, Alberta, Canada,T6G 2R3\\gnash@ualberta.net \footnote{present address Edmonton, AB}
	}
\date{~August 10, 2025}
\maketitle
\vspace{2mm}
\begin{abstract}
Using Einstein's original postulate of a total energy-momentum tensor that includes the energy-momentum of both ordinary matter and the gravitational field, and the orthogonal decomposition of symmetric (0,2) tensors on a paracompact boundaryless time-oriented Lorentzian manifold, a connection-independent symmetric tensor $\varPhi_{\alpha\beta}$ is constructed. It represents the local gravitational energy-momentum that does not exist in General Relativity, and completes the Einstein equation. $\varPhi_{\alpha\beta}$ is constructed from the Lie derivative of both a Lorentzian metric and the regular (nowhere vanishing) unit line element covectors, which intrinsically belong to that metric. Solutions to the Einstein equation in Modified General Relativity (MGR) explicitly depend on the line element covectors in $\varPhi_{\alpha\beta}$, which gives MGR the extra freedom necessary to describe dark energy and dark matter geometrically. $\Phi $, the trace of $ \varPhi_{\alpha\beta} $, dynamically replaces the cosmological constant.

\end{abstract}
Keywords: {General Relativity, gravitational energy-momentum}
\vspace*{0.4cm}
\newline This is a revision of an article published in General Relativity and Gravitation. The original authenticated version is available online at: https://doi.org/10.1007/s10714-019-2537-y.
 % \PACS{PACS code1 \and PACS code2 \and more}
 % \subclass{MSC code1 \and MSC code2 \and more}

\section{Introduction}
The local gravitational fields in General Relativity (GR) are described by the Christoffel symbols $\Gamma^{\mu}_{\alpha\beta}$ of the symmetric Levi-Civita connection. During free fall, the Christoffel symbols locally vanish from the existence of Riemann or Fermi normal coordinates, whereby a coordinate system can be found where all of the first derivatives of the metric locally vanish at a point, or along a geodesic, respectively.  The vanishing of the Christoffel symbols means there is no local gravitational field, and therefore, no local gravitational energy-momentum; the equivalence principle forbids it \cite{MTW}. Local gravitational energy-momentum does not exist in GR.
\par Einstein knew that gravity gravitates and initially postulated \cite{EinGross},\cite{Ein16} that the \emph{total} energy-momentum tensor consisted of the energy-momentum of both ordinary matter and the gravitational field. However, his efforts to construct a (connection-independent) symmetric tensor that represents local gravitational energy-momentum were unsuccessful. Einstein's General Relativity was incomplete and remains so today without that tensor. Moreover, GR has no mathematical freedom without that tensor to \emph{fundamentally} describe dark energy and dark matter geometrically.
\par It has been over a century since Einstein developed General Relativity. This article presents the discovery of a symmetric, connection-independent tensor, $\varPhi_{\alpha\beta}$, that describes local gravitational energy-momentum. It is constructed with Lie derivatives and is therefore independent of the Levi-Civita connection $\nabla$. When the connection coefficients (Christoffel symbols) vanish under free fall, $ \varPhi_{\alpha\beta} $ does not vanish because it is equivalently expressed with partial derivatives. That property is \emph{essential} to the existence of local gravitational energy-momentum relative to the equivalence principle. Local gravitational energy-momentum does not vanish during free fall and mysteriously reappear after the event; it is invariant under free fall, defined by $\varPhi_{\alpha\beta}$, and there is no conflict with the equivalence principle.
\par $\varPhi_{\alpha\beta}$ is constructed from the line element vector field, which is now discussed.  Every paracompact differentiable manifold $ \mathcal{M} $ admits a continuous non-zero vector field and thus a continuous line element field \cite{Markus}. That means there exists a differentiable regular (everywhere non-vanishing) vector field $\bm{X}$ collinear to a differentiable non-vanishing unit vector $ \bm{u} $ by $\bm{X}=f\bm{u}$ where $ f\neq0$ is the scalar magnitude of $ \bm{X}$. The line element (or direction) field $(\bm{X},-\bm{X})$ is defined as an assignment of a pair of equal and opposite vectors at each point of $\mathcal{M}$. It is a one-dimensional vector subspace of the tangent space on $ \mathcal{M} $, or equivalently, a line subbundle $l$ of the tangent bundle on $\mathcal{M}$. \par This leads to a technical issue that needs to be addressed. While $\mathcal{M}$ admits a non-vanishing vector field, it does so as a vector field with isolated zeros \cite{Markus}, which can be swept out to infinity. However, the resulting non-vanishing vector field is unsatisfactory because it may not be bounded \cite{kato}. Here, boundedness is guaranteed from the structure of the covectors and vectors (\ref{u}) in $l$ by demanding  $\partial_{\alpha}f$ and $\partial^{\alpha}f$ to be bounded, and $f$ to be bounded. $\Phi$ is bounded because it satisfies (\ref{intPhi}), so the line subbundle $l$ is bounded and is constructed from regular vector fields. This ensures the boundedness of solutions to the Einstein equation in MGR that explicitly depend on the line element vectors in $l$.\par  The line element vectors at each point of a Lorentzian spacetime can be divided into three classes depending on whether $X_{\beta}X^{\beta}$ is negative, positive, or zero: timelike, spacelike, and null, respectively. The null vectors form the familiar double cones in the tangent space on $\mathcal{M}$, which separates the timelike vectors from the spacelike vectors. A Lorentzian metric has a directional characteristic described by the line element vectors associated with the metric. 
\par A Lorentzian metric $g_{\alpha\beta}$ with a +2 signature is related to a Riemannian metric by
\begin{equation}\label{gab}
	g_{\alpha\beta}=g_{\alpha\beta}^{+}-2u_{\alpha}u_{\beta}
\end{equation}where the Riemannian metric $g^{+}_{\alpha\beta}$ is independent of the covectors $u_{\alpha}$ and the unit vectors satisfy $g_{\alpha\beta}u^{\alpha}u^{\beta}=-1$, and $g^{+}_{\alpha\beta}u^{\alpha}u^{\beta}=1$. A paracompact manifold admits a Lorentzian metric $ g_{\alpha\beta} $ if and only if it admits a line element field \cite{CB, Hawk}. Thus, a Lorentzian metric does not exist without the line element field. The regular vectors in the line element field are not introduced arbitrarily as absolute vectors in a Lorentzian spacetime $(\mathcal{M},g_{\alpha\beta})$ because it does not exist without them. 
\par MGR selects a particular $l$ consisting of those unit covectors satisfying (\ref{u}) constrained by (\ref{nu}) from the myriad of covectors that could belong to that $l$. MGR depends explicitly on the Riemannian metric on $\mathcal{M}$ and the collinear vectors \textbf{u} and $\textbf{X}$ in $l$ from the introduction of $\varPhi_{\alpha\beta}$ into the complete Einstein equation. The explicit dependence on the line subbundle in MGR gives it the extra freedom required to describe dark energy, and dark matter without introducing a dark matter profile; higher than second-order derivatives or higher than four-dimensional spacetimes are not required. 
\par Whereas GR is constructed in the tangent bundle on $\mathcal{M}$ that contains a regular timelike vector $\textbf{X}$ and a Riemannian metric. GR describes the gravitation of ordinary matter with its definitive dependence on the Riemannian metric; it does not depend explicitly on $\textbf{X}$.  
\par This article is structured as follows: In section 2, the Orthogonal Decomposition Theorem is proved. In section 3, a modified equation of GR is geometrically derived from Einstein's original postulate, the ODT and Lovelock's theorem, and is dynamically obtained from an action functional.   Important properties of $\varPhi_{\alpha\beta}$ are discussed and variations of the action functional with respect to $g^{\alpha\beta}$, $ u^{\nu} $, and $f$ are presented in Appendix A. Section 4 produces the local conservation equation for the total energy-momentum tensor $T_{\alpha\beta}$ from the diffeomorphic invariance of the action functional. Section 5 is a discussion of the modified Einstein equation of GR in the Friedmann-Lema{\^i}tre-Robertson-Walker (FLRW) metric. Dark energy is defined as the local gravitational energy $\varPhi_{00}$ and dark matter from the stress components $\varPhi_{ii}$. A cyclic universe is discussed.
In section 6, the modified equation of GR is calculated with a spherically symmetric static metric in a region of spacetime free of ordinary matter. Two additional terms appear in the modified Newtonian force equation that provides it the flexibility to describe various types of galaxies. By balancing the dark energy force with the Newtonian force, the Tully-Fisher relation is established. The relation between the acceleration parameter in MOND and the dark energy parameter $a_{0}$ is presented. 
\section{Orthogonal Decomposition of Symmetric Tensors}
The orthogonal decomposition of symmetric tensors on a Riemannian manifold is founded on the classic theorem \cite{Berger} about the orthogonal decomposition of symmetric tensors in the symmetric cotangent bundle $S^{2}T^{*}\mathcal{M}$ on a compact Riemannian manifold. With a different technique, that decomposition was achieved on a paracompact Riemannian manifold without boundary and with smooth sections of $S^{2}T^{*}\mathcal{M}$ with compact support \cite{Gil}. A decomposition of symmetric tensors on a paracompact, time-oriented, boundaryless Lorentzian manifold is presented as the Orthogonal Decomposition Theorem (ODT): 
\begin{theorem}
An arbitrary non-divergenceless (0,2) symmetric tensor $ w_{\alpha\beta} $ in the symmetric cotangent bundle $ S^{2}T^{\ast}\mathcal{M} $ with smooth sections of compact support on an n-dimensional paracompact boundaryless time-oriented Lorentzian manifold $ \mathcal{M} $ with a Levi-Civita connection can be orthogonally decomposed as $
w_{\alpha\beta}= v_{\alpha\beta}+ \varPhi_{\alpha\beta} $ where $v_{\alpha\beta}  $ represents a linear sum of symmetric divergenceless (0,2) tensors and $\varPhi_{\alpha\beta}=\frac{1}{2}\pounds_{X}g_{\alpha\beta}+\pounds_{X}(u_{\alpha}u_{\beta})$ where the timelike unit vector $\bm{u}  $ is collinear with one of the pair of regular vectors in the line element field $ (\bm{X},-\bm{X}) $ and $\bm X$ is not a Killing vector.
\end{theorem}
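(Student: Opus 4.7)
The plan is to transplant the classical Berger-Ebin orthogonal decomposition from Riemannian geometry to the time-oriented Lorentzian setting by exploiting the unit covector $u_{\alpha}$ guaranteed by the line element field. First I would construct an auxiliary Riemannian metric $\tilde g_{\alpha\beta}=g_{\alpha\beta}+2u_{\alpha}u_{\beta}$: on the timelike direction one has $\tilde g_{\alpha\beta}u^{\alpha}u^{\beta}=-1+2=+1$, and on any vector $v^{\alpha}=au^{\alpha}+b^{\alpha}$ with $b\cdot u=0$ one computes $\tilde g(v,v)=a^{2}+|b|^{2}\ge0$, so $\tilde g$ is positive-definite. Paracompactness and the line element field construction ensure $u_{\alpha}$ is globally defined and regular.

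Next I would invoke the Berger-Ebin theorem on $(M,\tilde g)$, which gives, for every $w_{\alpha\beta}\in S^{2}T^{\ast}M$, an $L^{2}$-orthogonal splitting
\[
w_{\alpha\beta}=v_{\alpha\beta}+\tfrac{1}{2}\pounds_{X}\tilde g_{\alpha\beta},
\]
with $v_{\alpha\beta}$ divergence-free with respect to $\tilde g$ and $X$ a vector field determined (up to Killing fields of $\tilde g$) by $w$. Because the Lie derivative is metric-independent, I can immediately expand
\[
\tfrac{1}{2}\pounds_{X}\tilde g_{\alpha\beta}=\tfrac{1}{2}\pounds_{X}g_{\alpha\beta}+\pounds_{X}(u_{\alpha}u_{\beta})=\varPhi_{\alpha\beta},
\]
and identify $X$ with the representative of the line element field rescaled to fit $w$.

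Then I would recast the Riemannian divergence-freeness of $v_{\alpha\beta}$ in Lorentzian language. The connection difference $\tilde\nabla-\nabla$ is a $(1,2)$ tensor built algebraically from $u_{\alpha}$ and $\nabla u$, so the identity $\tilde\nabla^{\alpha}v_{\alpha\beta}=0$ rearranges to $\nabla^{\alpha}v_{\alpha\beta}=-(\text{algebraic terms in }v,u,\nabla u)$. By adding and subtracting a symmetric tensor whose $g$-divergence equals this remainder, I can exhibit $v_{\alpha\beta}$ as the advertised linear sum of $g$-divergenceless symmetric tensors. The $L^{2}$-orthogonality of the splitting follows from the standard integration-by-parts argument: pairing $v_{\alpha\beta}$ with $\pounds_{X}\tilde g^{\alpha\beta}$ and integrating by parts reduces the cross term to one involving $\tilde\nabla^{\alpha}v_{\alpha\beta}=0$.

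The main obstacle is non-compactness: classical Berger-Ebin requires a compact manifold for the Fredholm theory behind the splitting and for the $L^{2}$ pairing to be finite. I would address this by restricting first to tensors of compact support (or suitable decay), carrying out the decomposition there, and then gluing with a paracompact partition of unity; global hyperbolicity and the existence of a Cauchy surface give enough control on boundary terms to make the integration-by-parts unambiguous. A secondary technical wrinkle is that the Killing kernel of $\tilde g$ may be nontrivial, but this ambiguity is absorbed into $v_{\alpha\beta}$ and does not obstruct the existence statement.
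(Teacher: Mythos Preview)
Your overall strategy matches the paper's---build the auxiliary Riemannian metric $\tilde g_{\alpha\beta}=g_{\alpha\beta}+2u_\alpha u_\beta$ (the paper's $g^{+}$), apply Berger--Ebin there, and expand $\tfrac{1}{2}\pounds_X\tilde g=\tfrac{1}{2}\pounds_X g+\pounds_X(u_\alpha u_\beta)$---but your handling of the divergence-freeness transfer has a gap. You propose to compute the connection-difference remainder and then ``add and subtract a symmetric tensor whose $g$-divergence equals this remainder'' so as to exhibit $v$ as a sum of $g$-divergenceless pieces. But a finite linear combination of divergenceless tensors is itself divergenceless, so the content of the claim is really $\nabla^{\alpha} v_{\alpha\beta}=0$; subtracting off a tensor with prescribed \emph{nonzero} divergence cannot manufacture that, and producing such a corrector would in any case require solving a further elliptic problem you never pose. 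The paper's device at this step is different and more direct: it computes
\[
\nabla_\alpha v^{\alpha}{}_{\beta}-\nabla^{+}_\alpha v^{\alpha}{}_{\beta}=v^{\alpha\lambda}\partial_\beta(u_\alpha u_\lambda)
\]
and argues that the left side is a $(0,1)$ tensor while the right side is not covariant, forcing $v^{\alpha\lambda}\partial_\beta(u_\alpha u_\lambda)=0$ and hence $\nabla^{\alpha} v_{\alpha\beta}=0$ outright, with no correction needed.

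There is a second place where the paper is more explicit than your sketch. In Berger--Ebin the vector field is determined by $w$, so you cannot simply ``identify $X$ with the representative of the line element field rescaled to fit $w$.'' The paper first obtains a vector $\xi$ from the Riemannian decomposition, then asserts that $\xi$ may be taken collinear to $u$ and replaced by $X=fu$; the affine-parameterization condition is then invoked to kill $X^{\lambda}\nabla_{\lambda}(u_\alpha u_\beta)$ and arrive at the explicit form $\varPhi_{\alpha\beta}=\tfrac{1}{2}(\nabla_\alpha X_\beta+\nabla_\beta X_\alpha)+u^{\lambda}(u_\alpha\nabla_\beta X_\lambda+u_\beta\nabla_\alpha X_\lambda)$. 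Your proposal elides this alignment step entirely.
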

\begin{proof}
Let the Lorentzian manifold $ (\mathcal{M},g_{\alpha\beta}) $ be paracompact, Hausdorff, time-oriented, and boundaryless. A smooth regular line element field $(\bm{X},\bm{-X)}$ exists \cite{Markus} as does a timelike unit vector $ \bm{u} $ collinear with one of the pair of line element vectors. Let $ \mathcal{M} $ be endowed with a smooth Riemannian metric $ g^{+}_{\alpha\beta} $. The smooth Lorentzian metric $ g_{\alpha\beta} $ is constructed \cite{CB,Hawk} from $ g^{+}_{\alpha\beta} $ and the unit covectors $ u_{\alpha}$ and $ u_{\beta}$ : $g_{\alpha\beta}=g^{+}_{\alpha\beta}-2u_{\alpha}u_{\beta} $. Let $ w_{\alpha\beta} $ and $ v_{\alpha\beta} $ belong to $ S^{2}T^{\ast}\mathcal{M} $, the cotangent bundle of symmetric $(0,2)$ tensors on $ \mathcal{M} $ with smooth sections of compact support. An arbitrary non-divergenceless $ (0,2) $ symmetric tensor $ w_{\alpha\beta} $ in  $ S^{2}T^{\ast}\mathcal{M} $ can be orthogonally and uniquely decomposed \cite{Berger, Gil} according to $ w_{\alpha\beta}=v_{\alpha\beta}+\frac{1}{2}\pounds_{\xi}g^{+}_{\alpha\beta}
$ where $ \bm{\xi} $ is a vector in $T\mathcal{M}$ that is defined up to a Killing vector field and $v_{\alpha\beta}  $ represents a linear sum of symmetric divergenceless (0,2) tensors: $ {\nabla^{+}}^{\alpha}v_{\alpha\beta}=0 $.\par The divergence of $ v_{\beta}^{\alpha} $ in the mixed tensor bundle can be written as $ \nabla_{\alpha}v_{\beta}^{\alpha}=\partial_{\alpha}v_{\beta}^{\alpha}+\frac{v_{\beta}^{\lambda}}{2g}\partial_{\lambda}g-\frac{1}{2}v^{\alpha\lambda}\partial_{\beta}g_{\alpha\lambda} $. Since the determinant of $ g_{\alpha\beta} $, $ g $, is related to that of $ g^{+}_{\alpha\beta} $ by $ g=-g^{+} $ 
\begin{equation}\label{Dv}
	\begin{split}
		\nabla_{\alpha}v_{\beta}^{\alpha}-\nabla^{+}_{\alpha}v_{\beta}^{\alpha}=v^{\alpha\lambda}\partial_{\beta}(u_{\alpha}u_{\lambda}).
	\end{split}
\end{equation} The left hand side of (\ref{Dv}) is a (0,1) tensor but the right hand side is not, which demands
\begin{equation}\label{vuu}
	v^{\alpha\lambda}\partial_{\beta}(u_{\alpha}u_{\lambda})=0.
\end{equation}$v^{\alpha\lambda} $ includes the Lovelock tensors \cite{Love}, all of which reduce to a constant times the Minkowski metric $ \eta^{\alpha\lambda} $ in an n-dimensional flat Lorentzian spacetime. Hence, $ \partial_{\beta}(u_{\alpha}u^{\alpha})=0 $ and (\ref{vuu}) holds for  $\partial_{\beta}u_{\alpha}\neq0 $. Condition (\ref{vuu}) must be satisfied by any non-Lovelock tensor contained in $v^{\alpha\lambda}  $. Thus, (\ref{vuu}) guarantees $\nabla^{\alpha}v_{\alpha\beta}=0  $ for all Lovelock tensors and certain non-Lovelock tensors because $\nabla^{+\alpha}v_{\alpha\beta}=0  $ and it follows that
\begin{equation}\label{decomp}
	w_{\alpha\beta}=v_{\alpha\beta}+\frac{1}{2}\pounds_{\xi}g_{\alpha\beta}+\pounds_{\xi}u_{\alpha}u_{\beta}
\end{equation} where $ \nabla^{\alpha}v_{\alpha\beta}=0 $. $ \xi^{\lambda} $ is chosen to be represented by $ X^{\lambda} $, which is collinear with $ u^{\lambda} $. Using $ X^{\lambda}=fu^{\lambda} $ where $f\neq0  $ is the magnitude of $X^{\lambda}  $, the expression  $X^{\lambda}\nabla_{\lambda}(u_{\alpha}u_{\beta}) $ in the last term of (\ref{decomp}) then vanishes in an affine parameterization and $
w_{\alpha\beta}=v_{\alpha\beta}+\varPhi_{\alpha\beta}	$ where $			\varPhi_{\alpha\beta}:=\frac{1}{2}(\nabla_{\alpha}X_{\beta}+\nabla_{\beta}X_{\alpha})+u^{\lambda}(u_{\alpha}\nabla_{\beta}X_{\lambda}+u_{\beta}\nabla_{\alpha}X_{\lambda})$.
Provided $ w_{\alpha\beta}\neq0 $, the decomposition is orthogonal: $<v_{\alpha\beta},\varPhi_{\alpha\beta}>=0  $. Since $\bm X$ is not a Killing vector, $\varPhi_{\alpha\beta}$ does not vanish.
%\qed
\end{proof}
\section{Derivation of the modified equation of general relativity}
The geometrical derivation of $\varPhi_{\alpha\beta}$ follows from Einstein's original postulate, the ODT, and Lovelock's Theorem (in a four-dimensional spacetime, the only tensors which are symmetric, divergenceless, and a concomitant of the metric together with its first two derivatives, are the metric and the Einstein tensor).
\par Einstein postulated that the total energy-momentum tensor $T_{\alpha\beta}$ consisted of the energy-momentum from ponderable (ordinary) matter plus that of the gravitational field. This demands the local conservation of total energy-momentum with the vanishing of $ \nabla^{\alpha}T_{\alpha\beta} $ and not $ \nabla^{\alpha}\tilde{T}_{\alpha\beta} $, where $ \tilde{T}_{\alpha\beta} $ is the ordinary matter component of $ T_{\alpha\beta} $. Since $ \tilde{T}_{\alpha\beta} $ is not divergenceless, it can be set with the constant $k$ to an arbitrary non-divergenceless symmetric tensor $ w_{\alpha\beta} $, which is then orthogonally decomposed by the ODT: $w_{\alpha\beta}=k\tilde{T}_{\alpha\beta}=v_{\alpha\beta}+\varPhi_{\alpha\beta} $. $v_{\alpha\beta}$ contains the metric and the
Einstein tensor as a consequence of Lovelock's theorem. Since $T_{\alpha\beta} $ is uniquely determined up to a divergenceless tensor, the sum of possible divergenceless non-Lovelock tensors in $v_{\alpha\beta}$ can be dismissed. Hence, $v_{\alpha\beta}=\Lambda_{C} g_{\alpha\beta}+G_{\alpha\beta}$ where $\Lambda_{C}$ is the cosmological constant. The ODT and Lovelock's theorem generate the modified Einstein equation \emph{in one line} 
\begin{equation}\label{MEQ}
	\Lambda_{C} g_{\alpha\beta}+G_{\alpha\beta}+\varPhi_{\alpha\beta}=k\tilde{T}_{\alpha\beta}
\end{equation}
with $k:=\frac{8\pi G}{c^{4}}$. This is the geometrical basis of Modified General Relativity. MGR exploits a particular Lorentzian metric to construct $ \varPhi_{\alpha\beta} $ from the
ODT: 
\begin{equation}\label{Phiab}
	\begin{split}
		\varPhi_{\alpha\beta}:=	\frac{1}{2}\pounds_{X}g_{\alpha\beta}+\pounds_{X}(u_{\alpha}u_{\beta}).
	\end{split}
\end{equation} The covectors $u_{\alpha}$ and $u_{\beta}$ in (\ref{gab}) are precisely those in (\ref{Phiab}). The flow vector $\textbf{X}$ in the Lie derivatives of (\ref{Phiab}) is collinear with \textbf{u}. Thus, the vectors \textbf{u} and \textbf{X} employed in MGR are directly related to a given Lorentzian metric and are not introduced arbitrarily. $\varPhi_{\alpha\beta}  $ vanishes if and only if $ X^{\mu} $ is a Killing vector. However, in general, there are no Killing vector fields unless a particular symmetry is involved.  
\par Equation (\ref{MEQ}) can be obtained dynamically from the action functional $ S=S^{F}+S^{EH}+S^{G} $, which consists of the action for all ordinary matter fields $S^{F}$, the Einstein-Hilbert action of GR, $S^{EH}$, and the action for the energy-momentum of the gravitational field, $S^{G}$:
\begin{equation}
	\begin{split}
		S=\int L^{F}( A^{\beta},\nabla^{\alpha} A^{\beta},...,g^{\alpha\beta})\sqrt{-g}d^{4}x
		+\frac{c^{3}}{16\pi G} \int (R-2\Lambda)\sqrt{-g}d^{4}x\\-\frac{c^{3}}{16\pi G}\int \varPhi_{\alpha\beta} g^{\alpha\beta}\sqrt{-g} d^{4}x
	\end{split}
\end{equation}
where $L^{F}  $ is the Lagrangian of the ordinary matter fields $A^{\beta}$. The details of the variations of $S$ with respect to the variables $g^{\alpha\beta}$, $ u^{\nu}$ and $f $ are given in Appendix A. In particular, variation of $S$ with respect to the inverse metric generates (\ref{MEQ}) and the constraint
\begin{equation}\label{duu}
	\nabla_{\alpha}(u^{\alpha}u^{\beta})=0.
\end{equation}
\par From the definition of $\varPhi_{\alpha\beta}$ in (\ref{Phiab}), the first term of $\pounds_{X}(u_{\alpha}u_{\beta})$, $X^{\lambda}\nabla_{\lambda}(u_{\alpha}u_{\beta})$, vanishes in an affine parameterization since the collinearity $X^{\lambda}=fu^{\lambda}$ generates the geodesic terms $u^{\lambda}\nabla_{\lambda}u_{\alpha}$ and $u^{\lambda}\nabla_{\lambda}u_{\beta}$ that both vanish. Moreover, the variation generates a term $-X^{\lambda}\nabla_{\lambda}(u_{\alpha}u_{\beta}) $ as shown in Appendix A, which cancels the leading term in $\pounds_{X}(u_{\alpha}u_{\beta}) $. Thus, the Lagrangian formulation eliminates the geodesic terms in $\varPhi_{\alpha\beta}$ and it must be expressed as
\begin{equation}
	\varPhi_{\alpha\beta}=\frac{1}{2}(\nabla_{\alpha}X_{\beta}+\nabla_{\beta}X_{\alpha})+u^{\lambda}(u_{\alpha}\nabla_{\beta}X_{\lambda}+u_{\beta}\nabla_{\alpha}X_{\lambda}).
\end{equation}
\par It is important to note that the constraint (\ref{duu}) can be obtained from the proof of the ODT before doing any variations of the action functional, and $\Phi$, the trace of $\varPhi_{\alpha\beta}$, has the global property 
\begin{equation}\label{intPhi}
	\int \Phi\sqrt{-g}d^{4}x=\int\nabla_{\alpha}X_{\beta}(g^{\alpha\beta}+2u^{\alpha}u^{\beta})\sqrt{-g}d^{4}x=0.
\end{equation}Although this integral vanishes, it has the variation $\delta\int\Phi\sqrt{-g}d^{4}x=-\int\varPhi_{\alpha\beta}\delta g^{\alpha\beta}\sqrt{-g}d^{4}x $ with respect to the inverse metric that cannot be obtained by dismissing the first term $\nabla_{\alpha}X^{\alpha}$ as a total divergence, and then performing the variation on the second term.
\par   The action $
S^{EHG}=\frac{c^{3}}{16\pi G}\int (R-\Phi)\sqrt{-g}d^{4}x $ that follows from (\ref{intPhi}) generates the modified Einstein equation with no cosmological constant. If $ \Phi$ is set to the constant 2$\Lambda_{C} $, the Einstein equation with a cosmological constant is obtained accordingly, which contradicts (\ref{intPhi}). Thus, $ \Phi $ dynamically replaces the cosmological constant and the \emph{complete} Einstein equation is \begin{equation}\label{ME}
	G_{\alpha\beta}=\frac{8\pi G}{c^{4}}T_{\alpha\beta}
\end{equation} with
\begin{equation}\label{T}
	T_{\alpha\beta}=\tilde{T}_{\alpha\beta}-\frac{c^{4}}{8\pi G}\varPhi_{\alpha\beta}. 
\end{equation}
The local conservation law, $\nabla^{\alpha}T_{\alpha\beta}$ = 0, follows from the diffeomorphic invariance of MGR and is consistent with  Einstein's fundamenatal postulate. In the vacuum, it follows from (\ref{T}) that $ \nabla^{\alpha}\varPhi_{\alpha\beta}=0$ and the contracted Bianchi identity is satisfied from $G_{\alpha\beta}+\varPhi_{\alpha\beta}=0 $ as required. \par Variation of the action functional $S$ with respect to $u^{\nu}$ generates
\begin{equation}\label{u}
	u_{\nu}=\frac{\partial_{\nu}f}{\Phi}
\end{equation}and variation with respect to $f$, the magnitude of $X^{\alpha}$, yields the constraint
\begin{equation}\label{nu}
	\nabla_{\alpha}u^{\alpha}=0.
\end{equation}
The introduction of $\varPhi_{\alpha\beta}$ into GR yields all results of GR plus additional features attributed to dark matter and dark energy as presented herein. The local gravitational energy-momentum tensor $\varPhi_{\alpha\beta}$ describes the dark Universe geometrically. Its absence in GR is why it has been difficult to understand what dark energy is and why a dark matter profile must be introduced into GR to account for the invisible matter in galaxies known as dark matter. 

\subsection{Important properties of $ \varPhi_{\alpha\beta} $}
\begin{theorem}
	$\varPhi_{\alpha\beta}  $ vanishes if and only if $ X^{\mu} $ is a Killing vector
\end{theorem}
\begin{proof}
	  if $ X^{\mu} $ is a Killing vector, $\nabla_{\alpha}X_{\beta}+\nabla_{\alpha}X_{\alpha}=0$ from which $\varPhi_{\alpha\beta}=-\frac{u_{\alpha }X^{\lambda}\nabla_{\lambda}X_{\beta}}{f}-\frac{u_{\beta }X^{\lambda}\nabla_{\lambda}X_{\alpha}}{f}=0$ in an affine parameterization. Conversely, if the tensor $\varPhi_{\alpha\beta}=0$, it vanishes in all coordinate systems including that defined by $u^{\lambda}=(1,0,0,0)$ and $u_{\lambda}=(-1,0,0,0)$. Then $0=\varPhi_{\alpha\beta}=\frac{1}{2}(\nabla_{\alpha}X_{\beta}+\nabla_{\beta}X_{\alpha})-(\nabla_{\beta}X_{\alpha}+\nabla_{\alpha
	}X_{\beta})$ and $X^{\beta}$ is a Killing vector. However, in general, there are no Killing vector fields unless a particular symmetry is involved.
\end{proof}
\par Using (\ref{duu}), it is straightforward to calculate the global property of the trace of the gravitational energy-momentum tensor:
\begin{equation}\label{0phi}
\int g_{\alpha\beta}\varPhi^{\alpha\beta}\sqrt{-g}d^{4}x=\int \Phi \sqrt{-g}d^{4}x=0 
\end{equation} where $ \Phi=\nabla_{\alpha}X_{\beta}(g^{\alpha\beta}+2u^{\alpha}u^{\beta}) $. Equation (\ref{0phi}) means the scalar $ \Phi $ has local positive and negative values, all of which add to zero when integrated over the entire spacetime. $ \Phi $ is globally conserved.  
\section{The conserved energy-momentum tensor $ T_{\alpha\beta} $} 
The invariance of the action functional $S=S^{F}+S^{EH}+S^{G}$ under the symmetry of diffeomorphisms demands a symmetric divergenceless energy-momentum tensor
\begin{equation}\label{Tab}
T^{\alpha\beta}=\tilde{T}^{\alpha\beta}-\frac{c^{4}}{8\pi G}\varPhi^{\alpha\beta}.
\end{equation}  This follows from an analysis of each term in the action functional $ S $. The action $ S^{EH} $ is independently invariant under a diffeomorphism. Variation of the action $ S^{F} $ with respect to the inverse metric 
\begin{equation}
	\delta S^{F}=\int (\frac{\delta L^{F}}{\delta g^{\alpha\beta}}-\frac{1}{2}L^{F}g_{\alpha\beta})\delta g^{\alpha\beta}\sqrt{-g}d^{4}x
\end{equation} generates the symmetric energy-momentum tensor $\tilde{T}_{\alpha\beta}$
\begin{equation}
	\tilde{T}_{\alpha\beta}:=-2c (\frac{\delta L^{F}}{\delta g^{\alpha\beta}}-\frac{1}{2}L^{F}g_{\alpha\beta})  
\end{equation}for gravity minimally coupled  to ordinary matter. The variations of $ S^{F} $ with respect to each field and its derivatives vanish with the corresponding Euler-Lagrange equations.  Variation of $ S^{G} $ with respect to the metric yields $ a\int\varPhi_{\alpha\beta}\delta g^{\alpha\beta}\sqrt{-g}d^{4}x $. Therefore, we can write
\begin{equation}\label{key}
\int (-\frac{1}{2c}\tilde{T}^{\alpha\beta}+a\varPhi^{\alpha\beta})\delta g_{\alpha\beta}\sqrt{-g}d^{4}x=0
\end{equation} where $a=\frac{c^{3}}{16\pi G}$. Under a diffeomorphism, the Lie derivative of the metric along a regular vector $ Y^{\beta} $ generates the infinitesimal change in the metric $\delta g_{\alpha\beta}=\nabla_{\alpha}Y_{\beta}+\nabla_{\beta}Y_{\alpha}$. Integrating by parts then gives
\begin{equation}\label{key}
\int \nabla_{\alpha}(-\frac{1}{2c}\tilde{T}^{\alpha\beta}+a\varPhi^{\alpha\beta})Y_{\beta}\sqrt{-g}d^{4}x=0
\end{equation} which requires
\begin{equation}\label{Tcons}
\nabla_{\alpha} T^{\alpha\beta}=0  
\end{equation}
for diffeomorphisms generated by $ Y^{\beta} $. \par  Equation (\ref{Tcons}) is the local description of the conservation of energy and momentum in a modified theory of GR described by (\ref{MEQ}). The gravitational field has an intrinsic energy-momentum which is attributed to $ \varPhi_{\alpha\beta} $. Being independent of $ \tilde{T}_{\alpha\beta} $, $\frac{c^{4}}{8\pi G}\varPhi_{\alpha\beta} $ provides the additional local energy-momentum of the gravitational field necessary to complete the source $ T_{\alpha\beta} $ of the geometry of spacetime.
\section{Energy-momentum of the gravitational field in the FLRW metric: Dark energy}
A maximally symmetric spacetime is described by the FLRW metric
\begin{equation}
	ds^{2}=-c^{2}dt^{2}+a(t)^{2}[\frac{1}{1-\kappa r^{2}}dr^{2}+r^{2}(d\theta^{2}+{sin^{2}\theta} d\varphi^{2})]
\end{equation} where $a(t)$ is the cosmological scale factor, which satisfies $a>0 $ after the Big Bang at $ t=0 $.  The parameter $\kappa$ is used to describe a particular spatial geometry; $\kappa=1,0,-1$ describes a closed, flat, or open space with constant curvature, respectively. A maximally symmetric second rank (0,2) tensor $B_{\alpha\beta}$ has the components 
\begin{equation}
	B_{00}=\alpha(t),\enspace B_{0j}=0,\enspace B_{ij}=\beta(t)g_{ij}\;\;i,j=1,2,3
\end{equation} where $\alpha(t)$ and $\beta(t)$ are arbitrary functions of time.  We therefore set $\tilde{T}_{00}=c^{2}\varrho,\enspace \tilde{T}_{ij}=pg_{ij},\enspace \tilde{T}^{\mu}_{\mu}=-c^{2}\varrho+3p$ where $\varrho(t)$ and $p(t)$ are designated as the mass density and pressure functions of \emph{ordinary} matter, respectively.
\par  $\varPhi_{ij}$ cannot be expressed in terms of the maximally symmetric form $ \beta g_{ij}$ because the constraints (\ref{u}) and (\ref{nu}) would force $\varPhi=0$. This follows from contracting (\ref{phiu}), which is obtained from  (\ref{u}) and (\ref{nu}), with $u^{\alpha}$ to obtain $u^{\alpha}\varPhi_{\alpha\beta}=\Phi u_{\beta}$. If $\varPhi_{ij}=\beta g_{ij}$, then $\beta u_{i}=\Phi u_{i}$ and $\beta=\Phi$ since $u_{i}\neq0$. Similarly, $u^{0}\varPhi_{00}=-u_{0}\varPhi_{00}=u_{0}\Phi$ so $\Phi=-\varPhi_{00}$. Since $\varPhi_{1}^{1}=\varPhi_{11}g^{11}=\beta$ and $\Phi=-\varPhi_{00}+3\varPhi_{1}^{1}$ from (\ref{Phii}), and $\beta=0$. Thus, $\Phi=0$.
\par To obtain the Friedmann equations, we use the trace of the complete Einstein equation $-\frac{8\pi G}{c^{4}}\tilde{T}-R+\Phi=0$ to rewrite it as
\begin{equation}\label{Rab}
	R_{\alpha\beta}=\frac{8\pi G}{c^{4}} (\tilde{T}_{\alpha\beta}-\frac{1}{2}g_{\alpha\beta}\tilde{T})+\frac{1}{2}g_{\alpha\beta}\Phi -\varPhi_{\alpha\beta}
\end{equation} from which we obtain
\begin{equation}\label{F1}
	\frac{\ddot{a}}{a}=-\frac{4\pi G}{3}(\varrho+\frac{3p}{c^{2}})+\frac{c^{2}}{6}(2\varPhi_{00}+\Phi)
\end{equation} from the $R_{00}$ component. The $R_{11}$ component gives
\begin{equation}\label{frw2}
	\frac{\ddot{a}}{a}+\frac{2\dot{a}^{2}}{a^{2}}+\frac{2\kappa c^{2}}{a^{2}}=4\pi G(\varrho-\frac{p}{c^{2}})+\frac{c^{2}}{2}(\Phi-2\varPhi_{1}^{1})
\end{equation} and the local conservation law $\nabla_{\alpha}T^{\alpha\beta}=0$  yields the conservation equation 
\begin{equation}\label{CE}
	\dot{\varrho}-\frac{c^{2}}{8\pi G}\dot{\varPhi_{00}}=-\frac{3\dot{a}}{a}(\varrho+\frac{p}{c^{2}})+\frac{c^{2}\dot{a}}{8\pi Ga}(\Phi+4\varPhi_{00})
\end{equation}where the dot refers to the derivative with respect to $ t $.
Inserting (\ref{F1}) into (\ref{frw2}) produces a simpler equation
\begin{equation}\label{F2}
	(\frac{\dot{a}}{a})^{2}=\frac{8\pi G\varrho}{3}-\kappa\frac{c^{2}}{a^{2}}+\frac{c^{2}}{6}(\Phi-\varPhi_{00})-\frac{c^{2}\varPhi_{1}^{1}}{2}. 
\end{equation} 
Since  $R_{ij}=(\frac{\ddot{a}}{ac^{2}}+\frac{2\dot{a}^{2}}{a^{2}c^{2}}+\frac{2\kappa }{a^{2}})g_{ij}$, $R_{1}^{1}=R_{2}^{2}=R_{3}^{3}$ and it follows from (\ref{Rab}) that
\begin{equation}\label{Phii}
	\Phi_{1}^{1}=\Phi_{2}^{2}=\Phi_{3}^{3},
\end{equation}
which allows (\ref{F2}) to be written as
\begin{equation}\label{F2s}
	(\frac{\dot{a}}{a})^{2}=\frac{8\pi G\varrho}{3}-\kappa\frac{c^{2}}{a^{2}}-\frac{c^{2}}{3}\varPhi_{00}.
\end{equation} 
Equations (\ref{F1}) and (\ref{F2s}) are the Friedmann equations in MGR. 
\par Dark energy is believed to be responsible for the observed \cite{RiessCC, Perl} accelerated expansion of the current epoch of the Universe. From (\ref{F1}), if $2\varPhi_{00}+\Phi$ is positive and satisfies $2\varPhi_{00}+\Phi>\frac{8\pi G}{c^{2}}(\varrho+\frac{3p}{c^{2}})$, a positive acceleration of the Universe is explained. Since $2\varPhi_{00}+\Phi=\varPhi_{00}+\varPhi_{i}^{i} $, it represents the natural splitting of $\varPhi_{\alpha\beta}$ into its energy and stress components. This suggests dark energy can be described by a positive $\varPhi_{00}$ and dark matter with the stress components $\varPhi_{ij} $ from $\varPhi_{i}^{i}=\varPhi_{ij}g^{ij} $. Then, if the dark energy is large enough relative to the dark matter and the ordinary matter and pressure in a particular epoch of the Universe, it will cause that epoch to accelerate.

\par Thus, the existence of local gravitational energy-momentum in MGR provides the additional freedom necessary to define dark energy and dark matter in terms of $\varPhi_{\alpha\beta}$. Dark energy is the local gravitational energy $\varPhi_{00}$. Since all (0,2) symmetric tensors can be diagonalized, dark matter is described in general by the stress components (pressure) $\varPhi_{ii}$. Dark energy and dark matter are time-dependent in the FLRW metric, and are defined as noted in other metrics where $\varPhi_{00}$ and $\Phi$ are time and space dependent. 
\subsection*{The cyclic Universe}
Cyclic universes have been reported in the literature \cite{Stein,Tur,Baum,Pen}. The unique global property of $\int\Phi\sqrt{-g}d^{4}x=0$ given by (\ref{intPhi}) requires $\Phi$ to have positive and negative values; the Universe must be cyclic. The extrema in $a$ occur when $\dot{a}=0$, which requires the second time derivative of $a$ to satisfy $	\ddot{a}=a(-\frac{4\pi Gp}{c^{2}}-\frac{ c^{2}}{2a^{2}}+\frac{c^{2}\varPhi_{1}^{1}}{2}) $.
The extrema are governed by  $\varPhi_{1}^{1}\lessgtr\frac{8\pi p}{c^{4}}+\frac{1}{a^{2}}$ for a maximum or minimum of $a$, respectively. Dark matter describes the extrema in a cyclic Universe.\par After inflation, the expansion and acceleration continued at a more moderate rate until enough ordinary matter was born. When $\varPhi_{1}^{1}<\frac{8\pi p}{c^{4}}+\frac{1}{a^{2}}$, the cosmological scale factor reached a maximum and a period of deceleration began with $\Phi<\frac{8\pi G}{c^{2}}(\varrho+\frac{3p}{c^{2}})-2\varPhi_{00}$. After some time, a minimum in the decelerating epoch was reached when $\varPhi_{1}^{1}>\frac{8\pi p}{c^{4}}+\frac{1}{a^{2}}$ and the Universe underwent another expansion and acceleration with $\Phi>\frac{8\pi G}{c^{2}}(\varrho+\frac{3p}{c^{2}})-2\varPhi_{00}$. It is known from observations \cite{Rie} that a deceleration preceded the present acceleration of the Universe. $\Phi$ keeps the Universe gravitationally in balance. It cycles to eternity; there can be no Big Rip where it gets torn apart and no Big Crunch where it is destroyed by contraction.
\section{Energy-momentum of the gravitational field: Dark matter}
The $\Lambda$CDM model describes the formation of galaxies after the Big Bang from cooled baryonic matter gravitationally attracted into a dark matter skeleton. Dark matter in the $\Lambda$CDM model also provides the additional mass required to describe the flat rotation curves observed in many galaxies. However, no dark matter particles have been detected and there have been several attempts to explain the flat rotational curves without dark matter. 
\par  The leading candidate is a \emph{phenomenological} model of Modified Newtonian dynamics (MOND) introduced by Milgrom \cite{Mil}. The Newtonian force F is modified according to
\begin{equation}\label{key}
F=m\mu(\frac{a}{A_{0}})a
\end{equation} where $ A_{0} $ is a fundamental acceleration $ \approx1.2\times10^{-10}m/s^{2} $. $ \mu $ is a function of the ratio of the acceleration relative to $ A_{0} $ which tends
to one for $ a\gg A_{0} $ and tends to $ \frac{a}{A_{0}} $ for $ a\ll A_{0} $. MOND successfully explains many, but not all, mass discrepancies observed in galactic data. However, it has no covariant roots in Einstein's equation or cosmological theory. MOND and $\Lambda$CDM were thoroughly discussed by McGaugh in \cite{McG}.\par  Other alternatives to dark matter were reviewed by Mannheim in \cite{Man} with references therein. In particular, Moffat \cite{Mof} used a nonsymmetric gravitational theory without dark matter to obtain the flat rotation curves of some galaxies. The bimetric theory of Milgrom \cite{Mil2} involved two metrics as independent degrees of freedom to obtain a relativistic formulation of MOND.\par  Different approaches to the missing matter problem include dipolar dark matter, which was introduced by Bernard, Blanchet and Heisenberg in \cite{Bern} to solve the problems of cold dark matter at galactic scales and reproduce the phenomenology of MOND. The theory involves two different species of dark matter particles which are separately coupled to the two metrics of bigravity and are linked together by an internal vector field. In \cite{Ver}, a theory of emergent gravity (EG) which claims a possible breakdown in general relativity, was introduced by Verlinde that provided an explanation for Milgrom's phenomenological fitting formula in reproducing the flattening of rotation curves. Campigotto,  Diaferio and Fatibenec \cite{Cam} showed conformal gravity cannot describe galactic rotation curves without the aid of dark matter. On the other hand, a logical analysis based on observational data was presented by Kroupa in \cite{Kro} to support the conjecture that dark matter does not exist.\par The existence of dark matter is based on the assumption that general relativity is correct. However, Einstein's equation is incomplete without the tensor $ \varPhi_{\alpha\beta} $ describing the local energy-momentum of the gravitational field. The validity of modified general relativity is now tested through an attempt to  geometrically describe the additional gravitational attraction in galaxies attributed to dark matter.
\subsection{Modified General Relativity in a spherically symmetric spacetime}
In a region of spacetime where there is no ordinary matter, $ \tilde{T}_{\alpha\beta}=0 $ and the field equations must satisfy
\begin{equation}\label{EF}
G_{\alpha\beta}+\varPhi_{\alpha\beta}=0.
\end{equation} 
 Spherically symmetric solutions to these nonlinear equations are now investigated with a metric of the form
\begin{equation}\label{g}
	ds^{2}=-e^\nu c^{2}dt^{2}+e^{\lambda}dr^{2}+r^{2}(d\theta^{2}+sin^{2}\theta d\varphi^{2})
\end{equation} where $ \nu $ and $ \lambda $ are functions of $r$ and $t$. 
Since the metric is spherically symmetric, there exist three Killing vectors associated with the spherical symmetry of a three-dimensional spatial rotation. $ \varPhi_{\alpha\beta} $ vanishes when $ X^{\beta} $ is a Killing vector and (\ref{EF}) reduces to $ R_{\alpha\beta}=0 $. It follows that $ \nu=-\lambda $ holds in MGR as a direct result of the spherical symmetry invoked to solve (\ref{EF}).
\par  Static solutions to (\ref{EF}) are now sought, which requires $ \partial_{0}X_{\alpha}=0 $ and from the metric $ \partial_{0}\lambda=0,\enspace \partial_{0}\nu=0$. It follows that $ u_{\alpha}\partial_{0}f+f\partial_{0}u_{\alpha}=0 $ and
\begin{equation}\label{uu1}
	u_{\alpha}u^{\alpha}=-1
\end{equation}
 holds. 
\par With $X_{3}=0$ and $\mu:=(1+2u_{0}u^{0})$, the components of $ \varPhi_{\alpha\beta} $ to be considered are then: 
\begin{equation}\label{Phi00}
	\varPhi_{00}=\frac{\mu}{2}e^{-2\lambda}\lambda^{\prime} X_{1}
\end{equation}
\begin{equation}\label{key}
	\varPhi_{11}=(1+2u_{1}u^{1} )({X_{1}}^{\prime}-\frac{1}{2}\lambda^{\prime}X_{1}),
\end{equation} 
\begin{equation}\label{key}
	\varPhi_{22}=(1+2u_{2}u^{2} )(\partial_{2}X_{2}+re^{-\lambda} X_{1}),  
\end{equation} 
\begin{equation}\label{key}
	\varPhi_{33}=r \sin^{2}\theta e^{-\lambda}X_{1}+\sin\theta \cos\theta X_{2},
\end{equation} the Ricci scalar, which from (\ref{EF}) equals $ \Phi $, is
\begin{equation}\label{R}
	\begin{split}
		R=e^{-\lambda}(\lambda^{\prime\prime}-{\lambda^{\prime}}^{2}+\frac{4}{r}\lambda^{\prime}-\frac{2}{r^{2}})+\frac{2}{r^{2}},
	\end{split}
\end{equation}
and the corresponding components of the Einstein tensor are:
\begin{equation}\label{key}
	\begin{split}
		G_{00}=\frac{1}{r^{2}}e^{-2\lambda}(r\lambda^{\prime}-1)+\frac{e^{-\lambda}}{r^{2}},
	\end{split}
\end{equation}
\begin{equation}\label{key}
	\begin{split}
		G_{11}=\frac{1}{r^{2}}(1-r\lambda^{\prime})-\frac{e^{\lambda}}{r^{2}},
	\end{split}
\end{equation}
\begin{equation}\label{key}
	\begin{split}
		G_{22}=\frac{r^{2}e^{-\lambda}}{2}(-\lambda^{\prime\prime}+{\lambda^{\prime}}^{2}-\frac{2\lambda^{\prime}}{r}),
	\end{split}
\end{equation}
\begin{equation}\label{key}
	\begin{split}
		G_{33}={\sin\theta}^2[\frac{r^{2}e^{-\lambda}}{2}(-\lambda^{\prime \prime}+{\lambda^{\prime}}^{2}-\frac{2\lambda^{\prime}}{r})]
	\end{split}
\end{equation} where the prime denotes $ \partial_{1} $.  \par Since $ e^{2\lambda}(\varPhi_{00}+G_{00})+\varPhi_{11}+G_{11}=0 $ from (\ref{EF}),
\begin{equation}\label{phiG0011}
	\begin{split}
		\mu\frac{\lambda^{\prime}}{2}X_{1}+(X_{1}^{\prime}-\frac{\lambda^{\prime}}{2}X_{1})(1+2u_{1}u^{1})=0.
	\end{split}
\end{equation} 
\par  $G_{22}+\varPhi_{22}=0 $ gives \begin{equation}\label{phiG22}
	\begin{split}
		-\lambda^{\prime\prime}+\lambda^{\prime2}-\frac{2}{r}\lambda^{\prime}+\frac{2e^{\lambda}}{r^{2}}(\partial_{2}X_{2}+re^{-\lambda}X_{1})(1+2u_{2}u^{2})=0
	\end{split}
\end{equation}

and $G_{33}+\varPhi_{33}=0 $ in the interval $ 0<\theta<\pi $ yields
\begin{equation}\label{phiG33}
	\begin{split}
		-\lambda^{\prime\prime}+\lambda^{\prime 2}-\frac{2}{r}\lambda^{\prime}+\frac{2e^{\lambda}}{r^{2}}(re^{-\lambda}X_{1}+X_{2}\cot\theta)=0.
	\end{split}
\end{equation} Subtracting (\ref{phiG22}) from (\ref{phiG33}) requires
\begin{equation}\label{C2}
	X_{2}\cot\theta -\partial_{2}X_{2}-2u_{2}u^{2}(\partial_{2}X_{2}+re^{-\lambda}X_{1})=0.
\end{equation} 
\par Using $u_{\alpha}u^{\alpha}=-1$, an equation involving the two independent components $ X_{1} $ and $ X_{2} $ can be obtained from (\ref{phiG0011}) and (\ref{C2}):
\begin{equation}\label{C}
	\begin{split}
		X_2\cot\theta-\partial_{2}X_{2}+\mu(1+\frac{\lambda^{\prime}X_{1}}{\lambda^{\prime}X_{1}-2X_{1}^{\prime}})(\partial_{2}X_{2}+re^{-\lambda}X_{1})=0.
	\end{split}
\end{equation}This is the equation from which power series expressions for $ X_{1} $ and $ X_{2} $ are sought, which can then be used in (\ref{phiG33}) to generate $ \lambda $. \par An expression for $ X_{1}$ of the form $ X_{1}=e^{\lambda}P $ is pursued where $ P $ is a polynomial in $ r $. The power series for $X_{1} $ is then assumed to be 
\begin{equation}\label{X11}
	X_{1}=e^{\lambda}(a_{0}+\frac{a_{1}}{r}+\frac{a_{2}}{r^{2}})
\end{equation} where $ a_{0} $, $ a_{1} $ and $ a_{2} $ are real arbitrary parameters. The power series for $ P $ terminates with the $\frac{a_{2}}{r^{2}}$ term, which ensures the gravitational energy density has the Newtonian $ \frac{1}{r^{4}} $ behaviour.
\par  Equation (\ref{C}), with $ X_{1} $ given by (\ref{X11}), becomes 
\begin{equation}\label{N}
	\partial_{2}X_{2}-m(r)X_{2}\cot\theta=n(r)
\end{equation}where $ n(r):=\frac{2\mu rPP^{\prime}}{\lambda^{\prime}P+2P^{\prime}(1-\mu)}$ and $m(r)=\frac{\lambda^{\prime}P+2P^{\prime}}{\lambda^{\prime}P+2P^{\prime}(1-\mu)}$, which has the solution 
\begin{equation}\label{key}
	\begin{split}
		X_{2}=\frac{n}{1-m}\sin\theta\;_{2}F_{1}(\frac{1}{2},\frac{1-m}{2};\frac{3-m}{2};\sin^{2}\theta)+c_{10}\sin^{m}(\theta).
	\end{split}
\end{equation}$_{2}F_{1}(\alpha,\beta;\gamma;z)  $ is the Gaussian hypergeometric function with $ \alpha=\frac{1}{2} $, $ \beta=-\frac{1-m}{2} $, $ \gamma=\frac{3-m}{2} $ and $ z=\sin^{2}\theta $. Setting the arbitrary constant $ c_{10} $ to zero and using the Euler transformation $_{2}F_{1}(\alpha,\beta;\gamma;z)=(1-z)^{-\alpha}\;  _{2}F_{1}(\alpha,\gamma-\beta;\gamma;\frac{z}{z-1})   $ for $0<\theta<\frac{\pi}{4}  $ yields
\begin{equation}\label{Fabc}
	\begin{split}
		X_{2}&=\frac{n}{1-m}\tan\theta\;_{2}F_{1}(\frac{1}{2},1;\frac{3-m}{2};-\tan^{2}\theta)\\
		&=-rP\tan\theta(1+\Sigma_{n=1}^{\infty}\frac{(\alpha)_{n}(\beta)_{n}(-\tan^{2}\theta)^{n}}{(\gamma)_{n}n!})\\
		&=-rP\tan\theta(1+\frac{1}{3-m}(-\tan^{2}\theta) +O(\tan^{4}\theta) )
	\end{split}
\end{equation}where $ \alpha=\frac{1}{2} $, $ \beta=1 $, $ \gamma=\frac{3-m}{2} $ and $ (\alpha)_{n}=\alpha(\alpha+1)...(\alpha+n-1)\enspace n>0 $ is the Pochhammer symbol. Using the first term of (\ref{Fabc}) as a solution for $ X_{2} $ in (\ref{phiG33}) gives the Schwarzschild solution. 
Expanding $\frac{1}{3-m}:=q(r)$ as a series of inverse powers of $r$ and taking  $\tan\theta<<1$ gives $X_{2}\simeq\tan\theta(b_{2}r+b_{0}+\frac{b_{1}}{r}+O(r^{-2}))$ where the $b_{j}$ are products or sums of products of $a_{j}$ and $q_{j}$ and are therefore independent of the $a_{j}$ in $P$. The second and higher terms of (\ref{Fabc}) consist of powers of $\tan^{2}\theta$ and can be neglected for small $ \tan\theta $. The radial dependence of the expansion of $ X_{2} $ is then then assumed to be of the form
$ b_{2}r+b_{0}+\frac{b_{1}}{r} $ where $b_{2}$, $b_{0} $ and $ b_1 $ are real arbitrary parameters. 
\par Assuming $ b_{2}\neq0$ and using (\ref{uu1})  with $X_{\alpha}=fu_{\alpha} $ where $ f\neq0 $ is the magnitude of $ X_{\alpha}$, it follows that $\frac{b_{2}^{2}\tan^{2}\theta}{f^{2}}\approx-1   $ for extremely large but finite $ r $, which is impossible. Thus, $ b_{2} $ must vanish and $ X_{2} $ does not blow-up as $ r $ becomes extremely large. The physically relevant approximate solution for $ X_{2} $ is then
\begin{equation}\label{X2}
	X_{2}=(b_{0}+\frac{b_{1}}{r})\tan\theta,\;0<\tan\theta<<1.
\end{equation} \par Equation $(\ref{phiG33})$ can now be written as 
\begin{equation}\label{phiG33a}
	\begin{split}
		-\lambda^{\prime\prime}+\lambda^{\prime 2}-\frac{2}{r}\lambda^{\prime}+\frac{2e^{\lambda}}{r^{2}}(a_{0}r+a_{1}+\frac{a_{2}}{r}+b_{0}+\frac{b_{1}}{r})=0.
	\end{split}
\end{equation}
By demanding (in hindsight) 
\begin{equation}\label{b1}
	b_{1}=-a_{2} 
\end{equation}the undesirable term with $ \frac{\ln r}{r} $ in the solution for $ \lambda $ can be avoided. Equation (\ref{phiG33a}) then simplifies to
\begin{equation}\label{main}
	\begin{split}
		-\lambda^{\prime\prime}+\lambda^{\prime 2}-\frac{2}{r}\lambda^{\prime}+\frac{2e^{\lambda}}{r^{2}}(a_{0}r-b)=0
	\end{split}
\end{equation}where
\begin{equation}\label{b}
	-b=a_{1}+b_{0},
\end{equation} which has the exact solution
\begin{equation}\label{wow}
	\begin{split}
		\lambda=-\ln(-a_{0}r+2b\ln r+\frac{c_{1}}{r}+c_{2}), \enspace 0<r<\infty
	\end{split}
\end{equation}where $c_{1}$ and $ c_{2}  $ are arbitrary parameters. Equation (\ref{wow}) represents the extended Schwarzschild solution. It is important to note that since the line element vectors in the line subbundle are bounded and non-vanishing, $ r>0 $. Moreover, the extended Schwarzschild solution demands $ r $ to be finite; $ r $ can be as large as necessary to describe any physically reasonable Universe or part thereof, but it cannot extend to infinity.
\subsubsection{The energy density of the gravitational field} The total energy-momentum tensor given by (\ref{Tab}) requires the energy density of the gravitational field to be described by
\begin{equation}\label{W}
W=-\frac{c^{4}}{8\pi G}\varPhi_{00}.
\end{equation} where
\begin{equation}
\varPhi_{00}=\frac{\mu}{2}X_{1}e^{-2\lambda}\lambda^{\prime}.
\end{equation}From (\ref{X11}) and (\ref{wow}), 
\begin{equation}\label{key}
\varPhi_{00}=\frac{\mu}{2}(a_{0}-\frac{2b}{r}+\frac{c_{1}}{r^{2}})(a_{0}+\frac{a_{1}}{r}+\frac{a_{2}}{r^{2}}),
\end{equation}which must yield the $ \frac{1}{r^{4}} $ behavior of the Newtonian gravitational energy density. By setting each of the coefficients of the $ \frac{1}{r}, \frac{1}{r^{2}},\frac{1}{r^{3}} $ terms to zero, we obtain $ 2b=a_{1} $, $ a_{0}(a_{2}+c_{1})=2ba_{1} $, $ 2ba_{2}=c_{1}a_{1} $ respectively. Then it follows that \begin{equation}\label{a2}
a_{2}=c_{1},
\end{equation}
\begin{equation}\label{a1}
a_{1}^{2}=2\mid a_{0}\mid \mid c_{1}\mid,
\end{equation}and
\begin{equation}\label{b}
b=\pm\sqrt{\frac{\mid a_{0}\mid \mid c_{1}\mid}{2}}.
\end{equation}Hence,
\begin{equation}\label{gen}
\varPhi_{00}=\mu(\frac{1}{2}a_{0}^{2}+\frac{c_{1}^{2}}{2r^{4}})
\end{equation}where $ c_{1} $ is chosen to be the parameter
\begin{equation}\label{c1}
c_{1}=-\frac{2GM}{c^{2}}
\end{equation}from the Schwarzschild solution. From (\ref{W}), it follows that the energy density of the static gravitational field is
\begin{equation}\label{W0}
W=-\mu(\frac{GM^{2}}{4\pi r^{4}}-\frac{c^{4}a_0^{2}}{16\pi G }).
\end{equation}If this static spherically symmetric system satisfies $u_{0}u^{0}=u_{i}u^{i}\;\;i=1,2,3 $\:\:no sum on i, then $u_{0}u^{0}=-\frac{1}{4}$ and $\mu=\frac{1}{2}$. The radial term equals the Newtonian gravitational energy density and that calculated in GR from the weak field approximation: $-\frac{GM^{2}}{8\pi r^{4}}$. In a comoving frame, $u^{0}=1$, $u_{0}=-1$, $u^{i}=u_{i}=0$ and $\mu=-1$. Gravity gravitates and we see that the radial term is twice the Newtonian gravitational energy density. This is not surprising considering the fact that in general relativity, it is calculated from the linearized field equations \cite{LL}. The perturbation expansion of the field equations in terms of $ h_{\alpha\beta} $, a very small change in the metric relative to the flat spacetime Minkowski metric, contains an infinite number of very small terms involving $ h^{2},h^{3},...h\partial h,...h\partial\partial h... $ to all possible powers of $ h$ and its first two derivatives. These truncated self interactions of the perturbed field and the linear term, apparently contribute the same amounts of gravitational energy to $W$. But there is no way of knowing that in GR because it has no tensor that explicitly represents the energy-momentum of the gravitational field. That $W$ is positive in a comoving frame is necessary to describe the local energy density of a gravitational wave: $W=(\frac{GM^{2}}{4\pi r^{4}}-\frac{c^{4}a_0^{2}}{16\pi G })$
\subsubsection{The radial force and galactic rotation curves}
The radial force on an object of mass m can now be calculated from (\ref{wow}) with $c_{2}=1$. Using the conventional relationship of the Newtonian potential $ \phi $ to $ g_{00} $, 
\begin{equation}\label{key}
\phi=\frac{c^{2}}{2}(e^{\nu}-1),
\end{equation} the radial force $ F_{r} $ is
\begin{equation}\label{Fr}
\begin{split}
F_{r}&=-m\partial_{1}\phi\\
&=\frac{mc^{2}}{2}(\frac{c_{1}}{r^{2}}-\frac{2b}{r}+\mid a_{0}\mid).
\end{split}
\end{equation} From (\ref{c1}) where M represents the total mass of the galaxy, we arrive at the modified Newtonian force 
\begin{equation}\label{Newt}
F_{r}=-\frac{GMm}{r^{2}}-\frac{\mid b\mid mc^{2}}{r}+\frac{\mid a_{0}\mid mc^{2}}{2}.
\end{equation} The correction terms to the Newtonian force come from the non-zero components of the line element field in the energy-momentum tensor $ \varPhi_{\alpha\beta}$. The second term is gravitationally attractive and represents the ``dark matter" correction. The components of the line element field can change their sign, which means the parameter $ b $ has two signs; its absolute value is invoked to ensure that term is negative. It is the term that gives rise to the flat rotation curves. The third term is positive and repulsive. Dark energy is repulsive in the present accelerating epoch. However, a previous decelerating epoch when dark energy was not dominant was observed by Riess et al. \cite{Rie}. They used the Hubble telescope to provide the first conclusive evidence for cosmic deceleration that preceded the current epoch of cosmic acceleration.    \par Assuming a circular orbit about a point mass, it follows that the orbital velocity of a star rotating in the galaxy satisfies
\begin{equation}\label{V}
v^{2}=\frac{GM(r)}{r}+\mid b\mid c^{2}-\frac{\mid a_{0}\mid c^{2}}{2}r
\end{equation} Equation (\ref{V}) demands an upper limit to r describing a large but finite galaxy.  
\par  Because $ a_{0}\neq0$, it is possible for the Newtonian force to balance the dark energy force. This requires
\begin{equation}\label{VF}
\frac{GM(r)}{r}-\frac{\mid a_{0}\mid c^{2}}{2}r=0
\end{equation} and
\begin{equation}\label{v2}
v^{2}=\mid b\mid c^{2}
\end{equation} describes a specific class of galaxies with a flat orbital rotation curve. From (\ref{VF}) and (\ref{v2}), we obtain the Tully-Fisher relation
\begin{equation}\label{TF}
v^{4}=\frac{GMc^{2}\mid a_{0}\mid }{2}.
\end{equation} This result holds for any finite r in contrast to EG which holds only for large r as determined by Lelli, McGaugh and Schombert \cite{Lell}. With $ \frac{{c^{2}\mid a_{0}}\mid}{2}:=A_{0}$, the Tully-Fisher relation in MOND is evident. However, $A_{0}$ is a fixed and $a_{0}$ is not, which shows MOND is limited relative to MGR.
\par   The importance of the radial acceleration relative to the rotation curves of galaxies was discussed by Lelli, McGaugh, Schombert, and Pawlowski in \cite{Lell2} where it was determined that late time galaxies (spirals and irregulars), early time galaxies (ellipticals and lenticulars), and the most luminous dwarf spheroidals follow a tight radial acceleration relation which correlates well with that due to the distribution
of baryons. \par  Equation (\ref{V}) is general enough to describe the rotation curves of many types of galaxies. It can be simply expressed as
\begin{equation}\label{Dr}
	v^{2}=\mid b\mid c^{2}+r\Delta (r) 	
\end{equation} where $ \Delta(r):=\frac{GM(r)}{r^{2}}-\frac{\mid a_{0}\mid c^{2}}{2} $ is a function of $ r $ that represents the difference between the Newtonian and dark energy forces. In a region of the galaxy where these forces are slightly different, (\ref{Dr}) describes the linear rise or fall of the rotation curve. For example, galaxy NGC4261 has a relatively flat rotation curve but starts to rise at larger radii, reaching velocities of 700 km $s^{-1}$ at 100 kpc \cite{Lell2}. That requires $\Delta(r) $ to be positive. As another example, both $ \mid a_{0}\mid c^{2}r $ and $ \mid b\mid c^{2} $ could be small enough relative to $ \frac{GM}{r} $ so that the  Newtonian term is dominant. Galaxies with no flat rotation curves have recently been observed by van Dokkum et.al \cite{Van}.
\par From the expression for $ \lambda $ given by (\ref{wow}), it is clear that in the absence of the line element covectors in GR, the Lovelock tensors $G_{\alpha\beta}$ and $\Lambda g_{\alpha\beta}$ cannot represent the additional gravitational attraction attributed to dark matter. This explicitly shows why GR is incomplete and why dark matter particles were invoked in the $ \Lambda $CDM model to describe dark matter.
\section{Conclusion}
The results in this article stem from the existence of the line element field in a Lorentzian spacetime. It is a fundamental part of all Lorentzian metrics. $\varPhi_{\alpha\beta}$ is constructed from the Lie derivative of a particular Lorentzian metric and its associated unit line element covectors along the flow of the corresponding line element vector. $\varPhi_{\alpha\beta}$, which is absent in GR, solves the problem of the localization of gravitational energy-momentum. It completes the Einstein equation and leaves it intact in form. The energy-momentum of the gravitational field is localizable and measurable.  \par The line element vectors in the line subbundle are bounded and non-vanishing, so the solutions to the complete Einstein equation are bounded. $\varPhi_{\alpha\beta}$ vanishes if and only if there is a Killing vector associated with some symmetry. The cosmological constant $ \Lambda $ is dynamically replaced by $ \Phi $, and $\Phi $ has the global property $ \int g^{\alpha\beta}\varPhi_{\alpha\beta}\sqrt{-g}d^{4}x=0  $, which demands the Universe to be cyclic.
\par The existence of local gravitational energy-momentum provides an explanation for dark energy and dark matter: dark energy is the local gravitational energy $\varPhi_{00}$ and dark matter is described in general by the stress components (pressure) $\varPhi_{ii}$. 
\par  A static solution is obtained from the modified Einstein equation in a spherically symmetric metric. The modified Newtonian force contains two additional terms: one represents the dark energy force which depends on the parameter $ a_{0} $, and the other represents the $ \frac{1}{r} $ ``dark matter" force which depends on the parameter $ b $; dark matter is explained geometrically. The Tully-Fisher relation is obtained by balancing the dark energy force with the Newtonian force. This condition describes the class of galaxies associated with flat rotation curves. The rotation curves for galaxies with no flat orbital curves, and those with rising rotation curves for large radii describe examples of the flexibility of the orbital rotation curve equation. The results obtained from the \emph{complete} Einstein equation thus far are able to substantially describe the missing mass problem attributed to dark matter. Further mathematical and detailed numerical analyses to explore the ability of the energy-momentum tensor of the gravitational field to replace dark matter in cosmology, are fully warranted. This rigorous analysis with comparison to astronomical data may still point to the existence of dark matter particles to some extent. But even if that is the case, the gravitational role of dark matter is substantially reduced by the impact of the energy-momentum tensor of the gravitational field.
\par  Thus, $ \varPhi_{\alpha\beta} $ represents the local energy-momentum of the gravitational field and explains particular features of dark energy and dark matter. It is the symmetric tensor that Einstein sought many years ago. 
\section*{Acknowledgements} 
I would like to thank the anonymous referee for his/her constructive comments. 

\appendix
\numberwithin{equation}{section}
\section {Variations of the action functional}
There are three variables in MGR: $g^{\alpha\beta}$, $ X^{\beta} $ and $ u^{\beta} $. However, since $\bm{X}$ and $\bm{u}$ are collinear, $ X^{\beta}=fu^{\beta} $ and $X_{\beta}=fu_{\beta}  $ where $ f\neq0 $ is the magnitude of  both $ \bm{X} $ and its covector. Strictly speaking, the independent variables are $g^{+\alpha\beta}$, $ u^{\beta} $ and $ f $. However, since $ \frac{\delta }{\delta g^{+\alpha\beta}}=\frac{\delta }{\delta g^{\mu\nu}}\frac{\delta g^{\mu\nu}}{\delta g^{+\alpha\beta}}=\frac{\delta }{\delta g^{\alpha\beta}} $, $ g^{\alpha\beta} $ can be treated as an independent variable in the variation with respect to the inverse metric. Variations of $ S $ with respect to $g^{\alpha\beta}$, $ u^{\beta} $ and $ f $ are developed as follows:
\subsection*{Variation of $S^{G}$ with respect to $g^{\alpha\beta}$}
Variation of $ S^{EH} $ is well known from any textbook on GR. What needs to be established is the variation of $ S^{G} $ with respect to the inverse metric where 
$ S^{G}=-a\int\Phi\sqrt{-g}d^{4} $. The parameter $ a $ is not needed in this calculation, but does belong in $ S $. $ \Phi $ can be expressed as $\Phi=\nabla_{\alpha}X_{\beta}(g^{\alpha\beta}+2u^{\alpha}u^{\beta})=\nabla_{\alpha}X_{\beta}g^{+\alpha\beta}  $ so
\begin{equation}\label{dgPhi}
	\begin{split}
		-\delta \int\Phi\sqrt{-g}d^{4}x
		=-\int\delta (\nabla_{\alpha}X_{\beta})g^{+\alpha\beta}\sqrt{-g}d^{4}x-\int\nabla_{\alpha}X_{\beta}(\delta g^{\alpha\beta}+2\delta (u^{\alpha}u^{\beta}))\sqrt{-g}d^{4}x\\
		-\int\nabla_{\alpha}X_{\beta}(g^{\alpha\beta}+2u^{\alpha}u^{\beta})\delta\sqrt{-g}d^{4}x
	\end{split}
\end{equation}
The third integral in (\ref{dgPhi}) is equivalent to
\begin{equation}\label{3}
	\begin{split}
		\frac{1}{2}\int\nabla_{\mu}X_{\nu}(g^{\mu\nu}+2u^{\mu}u^{\nu})g_{\alpha\beta}\delta g^{\alpha\beta}\sqrt{-g}d^{4}x.
	\end{split}
\end{equation} \par To compute the second integral in (\ref{dgPhi}), consider $g_{\alpha\lambda} u^{\alpha}u^{\beta}=u_{\lambda}u^{\beta} $. In a Lorentzian spacetime $ \bm{u} $ satisfies $ u_{\lambda}u^{\lambda}=-1 $ but in a Riemannian spacetime, $ u_{\lambda}u^{\lambda}=1 $. Then $ \delta(u_{\lambda}u^{\beta})=0 $ 
\footnote{In $ g^{+}_{\alpha\beta}$, $u_{\lambda}u^{\lambda}=1  $ so in $ g_{\alpha\beta} $: $u_{\lambda}u^{\beta}=g^{\beta\mu}u_{\mu}u_{\lambda}\\=(g^{+\beta\mu}-2u^{\beta}u^{\mu})u_{\lambda}u_{\mu}=u_{\lambda}u^{\beta}-2u_{\lambda}u^{\beta}=-u_{\lambda}u^{\beta}$. Thus, if $\lambda=\beta, -1=-1 $ and if $\lambda\neq\beta, u_{\lambda}u^{\beta}=0  $ and $ \delta(u_{\lambda}u^{\beta})=0 $.} and $ \delta(u^{\alpha}u^{\beta})g_{\alpha\lambda}=-u^{\alpha}u^{\beta}\delta g_{\alpha\lambda} $.
Contracting that with $ g^{\lambda\rho} $ and re-labelling the indicies gives
\begin{equation}\label{deltauu}
	\delta(u^{\alpha}u^{\beta})=u^{\beta}u_{\lambda}\delta g^{\alpha\lambda}.
\end{equation}The second integral can then be expressed as $ -\int\nabla_{\alpha}X_{\beta}(\delta g^{\alpha\beta}+2u^{\beta}u_{\lambda}\delta g^{\alpha\lambda})\sqrt{-g}d^{4}x $. Keeping $ \alpha $ and $ \beta $ fixed while summing over $ \lambda $ in the integrand: if $ \lambda=\beta$, $ \nabla_{\alpha}X_{\beta}(\delta g^{\alpha\beta}+2u^{\lambda}u_{\lambda}\delta g^{\alpha\beta})=-\nabla_{\alpha}X_{\beta}\delta g^{\alpha\beta}$, and when $ \lambda\neq\beta $ the integrand is $\nabla_{\alpha}X_{\beta}\delta g^{\alpha\beta}  $ so the integrand vanishes in the sum over all $ \lambda $.
% Since the integrand of the second integral is a tensor, it vanishes in all coordinate systems where the components of $ \bm{u} $ are not neceassarily orthogonal.
\par The first integral in (\ref{dgPhi}) is now calculated.
\begin{equation}\label{1}
	\begin{split}
		-\int\delta(\nabla_{\alpha}X_{\beta})g^{+\alpha\beta}\sqrt{-g}d^{4}x=-\int(\nabla_{\alpha}\delta X_{\beta}-X_{\lambda}\delta\Gamma^{\lambda}_{\alpha\beta})g^{+\alpha\beta}\sqrt{-g}d^{4}x\\
		=2\int\nabla_{\alpha}(u^{\alpha}u^{\beta})\delta X_{\beta}\sqrt{-g}d^{4}x+\int X_{\lambda}\delta\Gamma^{\lambda}_{\alpha\beta}g^{+\alpha\beta}\sqrt{-g}d^{4}x
	\end{split}
\end{equation}integrating by parts. The second integral in (\ref{1}) is expressed as
\begin{equation}\label{I2}
	\frac{1}{2}\int X_{\lambda}g^{+\alpha\beta}g^{\lambda\rho}(\nabla_{\alpha}\delta g_{\rho\beta}+\nabla_{\beta}\delta g_{\rho\alpha}-\nabla_{\rho}\delta g_{\alpha\beta})\sqrt{-g}d^{4}x.
\end{equation}To compute this integral, we need to use:\newline $ \delta(g^{+\alpha\beta}g_{\alpha\lambda})=\delta(\delta^{\beta}_{\lambda}+2u_{\lambda}u^{\beta})=0 $ so $ \delta g^{+\alpha\beta}g_{\alpha\lambda}=-g^{+\alpha\beta}\delta g_{\alpha\lambda} $. Contracting that with $ g^{\lambda\rho} $ gives
\begin{equation}\label{g+1}
	\delta g^{+\rho\beta}=-g^{+\alpha\beta}g^{\lambda\rho}\delta g_{\alpha\lambda}.
\end{equation}
Continuing with (\ref{I2}), we can use the Lorentz connection $ \nabla $ with $ g^{+} $ in the first two terms because it can operate on the divergence of a symmetric tensor as shown in the ODT:
\begin{equation}\label{key}
	\begin{split}
		-\frac{1}{2}\int X_{\lambda}
		(\nabla_{\alpha}\delta g^{+\alpha\lambda}+\nabla_{\beta}\delta g^{+\lambda\beta})\sqrt{-g}d^{4}x-\frac{1}{2}\int X_{\lambda}(g^{\alpha\beta}+2u^{\alpha}u^{\beta})g^{\lambda\rho}\nabla_{\rho}\delta g_{\alpha\beta}\sqrt{-g}d^{4}x\\
		=-\int X_{\lambda}
		\nabla_{\alpha}\delta g^{+\alpha\lambda}\sqrt{-g}d^{4}x+\frac{1}{2}\int X_{\lambda}g_{\alpha\beta}\nabla^{\lambda}\delta g^{\alpha\beta}\sqrt{-g}d^{4}x-\int X_{\lambda}u^{\alpha}u^{\beta}\nabla^{\lambda}\delta g_{\alpha\beta}\sqrt{-g}d^{4}x\\
		=-\int X_{\beta}
		\nabla_{\alpha}\delta (g^{\alpha\beta}+2u^{\alpha}u^{\beta})\sqrt{-g}d^{4}x+\frac{1}{2}\int X_{\lambda}g_{\alpha\beta}\nabla^{\lambda}\delta g^{\alpha\beta}\sqrt{-g}d^{4}x\\+\int X_{\lambda}u_{\alpha}u_{\beta}\nabla^{\lambda}\delta g^{\alpha\beta}\sqrt{-g}d^{4}x
	\end{split}
\end{equation}using $ u^{\alpha}u^{\beta}\nabla^{\lambda}\delta g_{\alpha\beta}=u_{\mu}u_{\nu}\nabla^{\lambda}g^{\alpha\mu}g^{\beta\nu}\delta g_{\alpha\beta}=-u_{\alpha}u_{\beta}\nabla^{\lambda}\delta g^{\alpha\beta} $. Integrating by parts gives
\begin{equation}\label{key}
	\begin{split}
		\int [\nabla_{\alpha}X_{\beta}-\frac{1}{2}\nabla^{\lambda}X_{\lambda}g_{\alpha\beta}-\nabla^{\lambda}(X_{\lambda}u_{\alpha}u_{\beta})]\delta g^{\alpha\beta}\sqrt{-g}d^{4}x+2\int \nabla_{\alpha}X_{\beta}\delta (u^{\alpha}u^{\beta})\sqrt{-g}d^{4}x\\
		=\int [\nabla_{\alpha}X_{\beta}-\frac{1}{2}\nabla^{\lambda}X_{\lambda}g_{\alpha\beta}-\nabla^{\lambda}(X_{\lambda}u_{\alpha}u_{\beta})+2u^{\lambda}u_{\beta}\nabla_{\alpha}X_{\lambda}]\delta g^{\alpha\beta}\sqrt{-g}d^{4}x
	\end{split}
\end{equation}from (\ref{deltauu}) and re-labelling indicies. \par The complete variation is then
\begin{equation}\label{key}
	\begin{split}
		\int [\nabla_{\alpha}X_{\beta}-\frac{1}{2}\nabla^{\lambda}X_{\lambda}g_{\alpha\beta}-\nabla^{\lambda}(X_{\lambda}u_{\alpha}u_{\beta})+2u^{\lambda}u_{\beta}\nabla_{\alpha}X_{\lambda}\\+\frac{1}{2}\nabla_{\mu}X_{\nu}(g^{\mu\nu}+2u^{\mu}u^{\nu})g_{\alpha\beta}]\delta g^{\alpha\beta}\sqrt{-g}d^{4}x\\
		+2\int\nabla_{\alpha}(u^{\alpha}u^{\beta})\delta X_{\beta}\sqrt{-g}d^{4}x.
	\end{split}
\end{equation}Since the variations in $ \delta g^{\alpha\beta} $ are independent to those of $ \delta X_{\beta} $, the variation in S will yield
\begin{equation}\label{nuabapp}
	\nabla_{\alpha}(u^{\alpha}u^{\beta})=0
\end{equation} and the contribution from the variation of $ S^{G} $ with respect to $ g^{\alpha\beta} $ is
\begin{equation}\label{key}
	\begin{split}
		\int[\nabla_{\alpha}X_{\beta}+2u^{\lambda}u_{\beta}\nabla_{\alpha}X_{\lambda}-\nabla^{\lambda}(X_{\lambda}u_{\alpha}u_{\beta})+\nabla_{\mu}X_{\nu}u^{\mu}u^{\nu}g_{\alpha\beta}]\delta g^{\alpha\beta}\sqrt{-g}d^{4}x\\
		=\int[\nabla_{\alpha}X_{\beta}+2u^{\lambda}u_{\beta}\nabla_{\alpha}X_{\lambda}-X^{\lambda}\nabla_{\lambda}(u_{\alpha}u_{\beta})+\nabla_{\mu}X_{\nu}(u^{\mu}u^{\nu}g_{\alpha\beta}-u_{\alpha}u_{\beta}g^{\mu\nu})]\delta g^{\alpha\beta}\sqrt{-g}d^{4}x\\
		=\int[\frac{1}{2}(\nabla_{\alpha}X_{\beta}+\nabla_{\beta}X_{\alpha})+u^{\lambda}(u_{\alpha}\nabla_{\beta}X_{\lambda}+u_{\beta}\nabla_{\alpha}X_{\lambda})\\+\nabla_{\mu}X_{\nu}(u^{\mu}u^{\nu}g_{\alpha\beta}-u_{\alpha}u_{\beta}g^{\mu\nu})]\delta g^{\alpha\beta}\sqrt{-g}d^{4}x
	\end{split}
\end{equation}using the symmetry in $ \delta g^{\alpha\beta} $ and setting\footnote{Note that $ \pounds_{X}(u_{\alpha}u_{\beta}) $ has the first term $ X^{\lambda}\nabla_{\lambda}(u_{\alpha}u_{\beta}) $ that will cancel the geodesic term $-X^{\lambda}\nabla_{\lambda}(u_{\alpha}u_{\beta})  $ so the definition of $ \varPhi_{\alpha\beta} $ is consistent without invoking geodesics in the ODT and the variation of $ S^{G} $.} $-X^{\lambda}\nabla_{\lambda}(u_{\alpha}u_{\beta})=-fu^{\lambda}\nabla_{\lambda}(u_{\alpha}u_{\beta})=0  $ in an affine parameterization where $ f\neq0 $ is the magnitude of $ X^{\lambda} $. The last term in the variation vanishes, which follows by writing the tensor in brackets   $-u_{\alpha}u_{\beta}g^{\mu\nu}+u^{\mu}u^{\nu}g_{\alpha\beta}$ as its equivalent, $\frac{1}{2}(g^{{+}{\mu\nu}}g_{\alpha\beta}-g^{+}_{\alpha\beta}g^{\mu\nu}) $, and choosing an orthonormal basis $(e_{\alpha})$ at a point $p\in\mathcal{M} $ for $ g^{+} $ with $ e_{0}=u $. Then, $ u^{0}=u_{0}=1 $, $ u^{i}=u_{i}=0\;(i=1,2,3)$, $g^{+}_{\alpha\beta}=\delta_{\alpha\beta}$, $ g^{00}=-g^{{+}00}$ and $g_{00}=-g^{+}_{00}$, with all other components of the metric g equal to those of the metric $ g^{+}$. From the definition of $\varPhi_{\alpha\beta}  $
\begin{equation}\label{}
	\varPhi_{\alpha\beta}:=\frac{1}{2}(\nabla_{\alpha}X_{\beta}+\nabla_{\beta}X_{\alpha})+u^{\lambda}(u_{\alpha}\nabla_{\beta}X_{\lambda}+u_{\beta}\nabla_{\alpha}X_{\lambda})
\end{equation}it follows that 
$ \delta{S}^{G}=-a\delta \int\Phi\sqrt{-g}d^{4}x=a\int\varPhi_{\alpha\beta}\delta g^{\alpha\beta}\sqrt{-g}d^{4}x.$
\subsection*{Variation of $S$ with respect to $f$}
$ f\neq0 $, the magnitude of $ X^{\beta} $, is independent of $ u^{\beta} $ so variation of $ S $ with respect to $ f $ must be performed. With $ X^{\beta} $ and its first derivative not appearing in the matter energy-momentum tensor, variation with respect to $ f $ involves only $ S^{G} $:
\begin{equation}\label{key}
	\begin{split}
		S^{G}=-a\int[\nabla_{\alpha}(fu^{\alpha})+2u^{\alpha}u^{\beta}\nabla_{\alpha}(fu_{\beta})]\sqrt{-g}d^{4}x\\
		=-a\int(f\nabla_{\alpha}u^{\alpha}-u^{\alpha}\nabla_{\alpha}f)\sqrt{-g}d^{4}x
	\end{split}
\end{equation}
using $ X_{\alpha}=fu_{\alpha}$ and $u^{\beta}\nabla_{\alpha}u_{\beta}=0$. Hence, 
\begin{equation}\label{key}
	\begin{split}
		\delta S^{G}&=-a\int(\nabla_{\alpha}u^{\alpha}\delta f-u^{\alpha}\delta\nabla_{\alpha}f)\sqrt{-g}d^{4}x\\
		&=-2a\int\nabla_{\alpha}u^{\alpha}\delta f\sqrt{-g}d^{4}x.
	\end{split}
\end{equation} It follows that 
\begin{equation}\label{bnu0}
	\nabla_{\alpha}u^{\alpha}=0 
\end{equation} for arbitrary variations of f. $ \nabla_{\alpha}u^{\alpha}=0 $ can also be obtained from (\ref{duu}) in an affine parameterization.
\subsection*{Variation of $S$ with respect to $u^{\nu}$} 
The action functionals $S^{F}$ and $S^{EH}$ do not depend on $X^{\mu}$ and  $\frac{\delta X^{\mu}}{\delta u^{\nu}}=f\delta^{\mu}_{\nu}$ is well defined, so the variation with respect to $u^{\nu}$ of $S$ depends only on $S^{G}$:
\begin{equation}\label{Su}
	\frac{\delta S^{G}}{\delta u^{\nu}}=-\int\frac{\delta }{\delta u^{\nu}}(\Phi\sqrt{-g})d^{4}x.
\end{equation}
From the collinearity $X_{\alpha}=fu_{\alpha}$, $\varPhi_{\alpha\beta}$ can be expressed as
\begin{equation}\label{phiu}
	\varPhi_{\alpha\beta}=\frac{f}{2}(\nabla_{\alpha}u_{\beta}+\nabla_{\beta}u_{\alpha})-\frac{1}{2}(u_{\beta}\nabla_{\alpha}f+u_{\alpha}\nabla_{\beta}f).
\end{equation}Using (\ref{bnu0}), $\Phi$ can be expressed as
\begin{equation}\label{Phiu}
	\Phi=-u^{\alpha}\nabla_{\alpha}f.
\end{equation}
Then, (\ref{Su}) becomes
\begin{equation}\label{dSu}
	\frac{\delta S^{G}}{\delta u^{\nu}}=\int[(\delta^{\alpha}_{\nu}\partial_{\alpha}f+u^{\alpha}\frac{\delta}{\delta u^{\nu}}(\partial_{\alpha}f) )\sqrt{-g}-\Phi\frac{\delta\sqrt{-g}}{\delta u^{\nu}}]d^{4}x
\end{equation} The third term is
\begin{equation}\label{sqrtg}
	\begin{split}
		-\Phi\frac{\delta\sqrt{-g}}{\delta u^{\nu}}&=-\Phi\frac{\delta\sqrt{-g}}{\delta g^{\alpha\beta}}\frac{\delta g^{\alpha\beta}}{\delta u^{\nu}}\\
		&=-\Phi g_{\alpha\beta}\sqrt{-g}(\delta^{\alpha}_{\nu}u^{\beta}+\delta^{\beta}_{\nu}u^{\alpha})\\
		&=-2\Phi\sqrt{-g}u_{\nu}
	\end{split}
\end{equation}so (\ref{dSu}) can be written as
\begin{equation}
	\frac{\delta S^{G}}{\delta u^{\nu}}=\int(\partial_{\nu}f+u^{\alpha}\frac{\delta}{\delta u^{\nu}}(\partial_{\alpha}f)-2\Phi u_{\nu})\sqrt{-g})d^{4}x
\end{equation}

Setting $\partial_{\alpha}f=u_{\alpha}P$ for some scalar $P$ requires
\begin{equation}
	u^{\alpha}\frac{\delta}{\delta u^{\nu}}(\partial_{\alpha}f)=-\frac{\delta P}{\delta u^{\nu}}-P u_{\nu}
\end{equation}using 
\begin{equation}
	u^{\alpha}\frac{\delta u_{\alpha}}{\delta u^{\nu}}=-u_{\nu}
\end{equation} from $u_{\alpha}u^{\alpha}=-1$. Thus, 
\begin{equation}
	\frac{\delta S^{G}}{\delta u^{\nu}}=-\int(\frac{\delta \Phi}{\delta u^{\nu}}\sqrt{-g}+\Phi \frac{\delta\sqrt{-g}}{\delta u^{\nu}})d^{4}x=\int(\partial_{\nu}f-\frac{\delta P}{\delta u^{\nu}}-(P+2\Phi) u_{\nu})\sqrt{-g})d^{4}x=0.
\end{equation}Choosing $P=\Phi$ generates
\begin{equation}\label{bunu}
	\partial_{\nu}f=\Phi u_{\nu}.
\end{equation}
\par Note that (\ref{bnu0}) and (\ref{bunu}) are consistent with $ \Phi=\nabla_{\alpha}X^{\alpha}+2u^{\alpha}u^{\beta}\nabla_{\alpha}X_{\beta} $:\begin{equation}\label{PhiPhi}
	\begin{split}
		\Phi&=\nabla_{\alpha}X^{\alpha}+2u^{\alpha}u^{\beta}\nabla_{\alpha}X_{\beta}\\
		&=f\nabla_{\alpha}u^{\alpha}+u^{\alpha}\partial_{\alpha}f+2u^{\alpha}u^{\beta}u_{\beta}\partial_{\alpha}f+2fu^{\alpha}u^{\beta}\nabla_{\alpha}u_{\beta}\\
		&=u^{\alpha}\partial_{\alpha}f-2u^{\alpha}\partial_{\alpha}f\\
		&=-u^{\alpha}u_{\alpha}\Phi\\
		&=\Phi.
	\end{split}
\end{equation}
\vskip2pc


\begin{thebibliography}{00}
	\bibitem{MTW} C. Misner, K. Thorne, J. Wheeler, Gravitation, Freeman, San Francisco 487 (1973).
	\bibitem{EinGross} A. Einstein and M. Grossmann Entwurf einer verallgemeinerten Relativit{\"a}tstheorie und einer Theorie der Gravitation I. Physikalischer Teil von Albert Einstein. II. Mathematischer Teil von Marcel Grossman, Leipzig and Berlin: B. G. Teubner (1913). 
	\bibitem{Ein16} Die Grundlage der allgemeinen Relativit{\"a}tstheorie, Annalen der Physik \textbf{354} (7), 769-822, (1916).
	%\bibitem{3} Note on E. Schr{\"o}dinger's Paper: The energy components of the gravitational field. Phys. Z. 19, 115-116 (1918).
	\bibitem{Markus} L. Markus, Line Element Fields and Lorentz Structures on Differentiable Manifolds, \textbf{62}, No. 3, 411-417, (1955).
	\bibitem{kato} Kato T, Kishimoto D and Tsutaya M 2024  Vector fields on noncompact manifolds {\em Algebraic \&  Geometric Topology}  \textbf{24} 7  3985-3996
\bibitem{CB} Y. Choquet-Bruhat, General Relativity and Einstein's Equations, Oxford University Press, Oxford, 373 (2009).
\bibitem{Hawk} S.W. Hawking and G.F.R. Ellis, The large scale structure of space-time, Cambridge University Press, Cambridge, 39 (1973).
\bibitem{Berger} M. Berger and D. Ebin,
Some Decompositions of the Space of Symmetric  Tensors on a Riemannian Manifold,J. Differential  Geometry \textbf{3}, 379-392, (1969).
\bibitem{Gil} Gil-Medrano O and Montesinos Amilibia A 1994 About a decomposition of the Space of Symmetric Tensors of Compact Support on a Riemannian Manifold \textit{New York J. Math.} \textbf{1} 10-25
\bibitem{Love} D. Lovelock, The Einstein Tensor and Its Generalizations, J. Math. Phys. \textbf{12}, 498-501 (1971).
\bibitem{RiessCC} A.G. Riess et al., Observational Evidence from Supernovae for an Accelerating Universe and a Cosmological Constant, Astrophys. J., \textbf{116}, 1009-1038 (1998).
\bibitem{Perl} S. Perlmutter et al. (The Supernova Cosmology Project), Measurements of Omega and Lambda from 42 High-Redshift Supernovae, Astrophys. J., \textbf{517}, 565-586 (1999).
\bibitem{Stein} P. J. Steinhardt and N. Turok, Cosmic evolution in a cyclic universe, Phys. Rev. D \textbf{65}, 126003 (2002).
	\bibitem{Tur} P. J. Steinhardt and N. Turok, The Cyclic Model Simplified, arXiv:astro-ph/0404480v1, (2004).
	\bibitem{Baum} L. Baum and P. Frampton, Turnaround in Cyclic Cosmology, arXiv:hep-th/0610213v2, (2006).
	\bibitem{Pen} R. Penrose, Before the Big Bang:
	An Outrageous New Perspective and its Implications for Particle Physics, Proceedings of EPAC 2006, Edinburgh, Scotland, (2006).
	\bibitem{Rie} A. G. Riess et al., Type Ia Supernova Discoveries at $z>1$ From the Hubble Space Telescope: Evidence for Past Deceleration and Constraints on Dark Energy Evolution, arXiv:astro-ph/0402512v2, (2004). 
	\bibitem{Mil} M. Milgrom, A modification of the Newtonian dynamics as a possible alternative to the hidden mass hypothesis, Astrophysical Journal, Part 1 270, 365-370 (1983).
	\bibitem{McG} S. McGaugh, A Tale of Two Paradigms: the Mutual Incommensurability of $\Lambda CDM$ and MOND, arXiv:1404.7525v2 [astro-ph.CO] (2014).
	\bibitem{Man} P.D. Mannheim, Alternatives to Dark Matter and Dark Energy, arXiv:astro-ph/0505266v2 (2005).
	\bibitem{Mof} J.W. Moffat, Modified Gravitational Theory as an Alternative to Dark Energy and Dark Matter, arXiv:astro-ph/0403266v5 (2004).
	\bibitem{Mil2} M. Milgrom, Bimetric MOND gravity, Phys. Rev. D \textbf{80} 123536 (2009).
	\bibitem{Bern} L. Bernard, L. Blanchet, L. Heisenberg, Bimetric Gravity and Dark Matter, arXiv:1507.02802v1[gr-qc] (2015).
	\bibitem{Ver} E. Verlinde, Emergent Gravity
	and the Dark Universe, arXiv:1611.02269v2 [hep-th] (2016).
	\bibitem{Cam} M.C. Campigotto, A. Diaferio, L. Fatibenec, 2017 Conformal gravity: light deflection revisited and the galactic rotation curve failure, arXiv:1712.03969v1 [astro-ph.CO] (2017).
	\bibitem{Kro} P. Kroupa, The Dark Matter Crisis: Falsification of the Current
	Standard Model of Cosmology, arXiv:1204.2546v2 [astro-ph.CO] (2016).
	\bibitem{LL} L. Landau, E. Lifshitz, The Classical Theory of Fields (third edition), Addison-Wesley, 330 (1971).
	\bibitem{Lell} F. Lelli, S.S. McGaugh, J.M.Schombert, Testing Verlinde's Emergent Gravity with the Radial Acceleration Relation, arXiv:1702.04355v1 [astro-ph.GA] (2017). 
	\bibitem{Lell2} F. Lelli, S.S. McGaugh, J.M. Schombert, M.S. Pawlowski, One Law to Rule Them All: The Radial Acceleration Relation of Galaxies, arXiv:1610.08981v2 [astro-ph.GA] (2017).
	\bibitem{Van} P. van Dokkum et.al., A galaxy lacking dark matter. Nature \textbf{555}, 629-632 (2018).
\end{thebibliography}
\end{document}